\setlist[itemize]{noitemsep}
\setlist[enumerate]{noitemsep}
\tikzset{
    a/.style={circle, draw=black, fill=red!20, minimum size=1mm, font=\small},
    i/.style={circle, draw=black, fill=cyan!20, minimum size=1mm, font=\small},
    o/.style={circle, draw=black, fill=green!20, minimum size=1mm, font=\small},
    v/.style={circle, draw=black, fill=white, minimum size=1mm, font=\small},
    e/.style={thick}
}
\newtheorem{theorem}{Theorem}[section]
\theoremstyle{definition}
\newtheorem{definition}{Definition}[section]
\newcommand*{\fl}{\ensuremath{\mathcal{L}}}
\newcommand*{\fe}{\ensuremath{\mathcal{E}}}
\newcommand{\ladt}[1]{\fl_{\text{#1}}}
\newcommand*{\lra}{\longrightarrow}
\newcommand*{\llr}{\longleftrightarrow}
\newcommand*{\rots}{\ensuremath{\text{ROT}_s}\xspace}
\newcommand{\knot}[1]{\ensuremath{#1\text{-NOT}}}
\title{A Systematic Study of Single-Anchor Logical Gadgets}
\author{Fikret H. Güngör}
\date{November 2025}
\begin{document}
\maketitle

\begin{abstract}
We present a systematic study of logical gadgets for 3-coloring under a single anchor constraint, where only one color representing logical falsehood is fixed to a vertex.
We introduce a framework of what we call ladgets (logical gadgets), graph gadgets that implement Boolean functions.
Then, we define a set of core gadgets, called primitives, which help identify and analyze the logical behavior of ladgets.
Next, we examine the structure of several standard ladgets and present several structural constraints for ladgets.
Through an exhaustive search of all non-isomorphic connected graphs up to 10 vertices, we verify all minimal constructions for standard ladgets. Notably, we identify exactly two non-isomorphic minimal XNOR ladgets in approximately 29 billion gadget configurations, highlighting the rarity of gadgets capable of expressing logical behavior.
We also present an embedding technique that embeds ladgets with less than 3 inputs from 3-coloring into k-coloring.
Our work shows how the single anchor constraint creates a fundamentally different framework from the two anchor gadgets used in SAT reductions.
\end{abstract}

\section{Introduction}
The 3-coloring problem is one of the most fundamental NP-complete problems, asking whether a graph can be colored with 3 colors. Reductions of this problem involve developing \emph{logical gadgets}, small graphs that simulate Boolean operations within some coloring constraints.

Traditional approaches for developing logical gadgets generally use two anchors, two vertices fixed to colors representing truth and falsehood in the logical sense.
While this approach results in smaller and easier gadgets to work with because of the symmetry, it involves setting too many color constraints therefore it feels unnatural from a graph-theoretic perspective.
This raises this natural question: \emph{What happens when we break the symmetry?}
\bigskip

In this paper, we investigate logical gadgets under a \emph{single anchor constraint}, fixing a single color representing falsehood to a distinguished \emph{anchor vertex}, while truth values represented by two remaining colors interchangeably. At first, this seems like a minor change, but the asymmetric constraints force us to develop fundamentally different primitives and gadgets.

\section{Preliminaries}
\subsection{3-Coloring and Gadgets}
A \emph{proper $n$-coloring} of a graph $G = (V, E)$ is a function $c: V \to \{0, 1, 2, ..., n-1\}$ such that for every edge $(u, v) \in E$, we have $c(u) \neq c(v)$. The 3-coloring problem asks whether a given graph admits a proper 3-coloring.

Having defined graph coloring, we asked a natural question: \emph{What happens when we fix some colors to certain vertices and observe how this constraint affects other vertices?} More precisely, suppose we pick some vertices as \emph{inputs} by fixing their colors to specific values, and then examine the possible colors of a designated \emph{output} vertex in any valid coloring scheme. This input-output behavior is similar to a machine, or a mathematical function: we provide inputs and observe the resulting output.
These graph structures, which encode input-output relationships through graph coloring, are called \emph{gadgets}. A gadget implements a function from input colors to output colors, constrained by the graph's internal structure.
\bigskip

Having established this framework, we ask: \emph{Can we encode Boolean operations within this gadget system?}

\subsection{Single anchor system}
The answer is yes. To implement Boolean functions, our gadgets must be able to distinguish between the values \textsc{true} and \textsc{false}. We achieve this by introducing \emph{anchor vertices}, vertices that are fixed to a color regardless of the input configuration, serving as reference values within this framework.
Traditional constructions employ \emph{two anchor vertices}: one fixed to a color representing \textsc{true}, another to a color representing \textsc{false}. This approach provides symmetric reference points, resulting in cleaner and smaller gadget configurations. However, in a theoretical sense, instead of two anchors, we actually need only one anchor to fix some logical value for distinguishing between \textsc{true} and \textsc{false}. We call this a \emph{single anchor system}.

\begin{definition}[Single Anchor System]\label{def:sa-system}
In the single anchor system for k-coloring:
\begin{enumerate}
    \item A distinguished \emph{anchor vertex} $a_0$ is fixed to color $0$, representing logical falsehood.
    \item The remaining colors $\{1, 2, \ldots, k-1\}$ all representing logical truth, used interchangeably.
    \item For any vertex $v$, we interpret:
    \begin{itemize}
        \item $v \equiv 1$ iff $c(v) \in \{1, 2, \ldots, k-1\}$
        \item $v \equiv 0$ iff $c(v) = 0$
    \end{itemize}
\end{enumerate}
\end{definition}
We will adopt the single anchor system for 3-coloring throughout paper unless stated otherwise.

This system is \emph{minimal} in the sense that it uses the fewest possible fixed colors while still being able to express logical operations. However, with this asymmetry and lack of constraint on colors, gadgets become more complex and fundamentally new structures emerge.

\subsection{Basic Definitions and Notation}
Throughout this paper, we will work with single anchor systems, using the notation of \ref{def:sa-system}, and let $S = \{0, 1, 2\}$ to denote the set of all colors in 3-coloring. Let $T = \{1, 2\} \subset S$ to denote the true colors in $S$.

We denote the behavior of a gadget as a mapping from input colors to output colors. If a gadget maps an input color $i$ to a output color $j$, we write $i \to j$. If the output can be one of several colors, we write $i \to \{j_1, j_2, \ldots\}$. When two colors can map to each other interchangeably, we write $i \leftrightarrow j$.

When expressing the colors of vertices, we treat gadgets in the expression as functions maping input colors from $S$ to sets of possible output colors according to their mapping. For example, a gadget $\mathcal{G}$ with $n$ inputs can be viewed as the function $\mathcal{G} : S^n \to \wp(S)$.

A gadget is called \emph{reversible} if it retains the same mapping when its input and output vertices are swapped.

\section{Primitives}
\subsection{Definition}
\emph{Primitives} are small gadgets that implement basic operations. These primitives serve as building blocks for constructing more complex logical machinery within our system.

\subsection{Core Primitives}
\subsubsection{MOV (Move)}
\begin{figure}[H]
	\centering
	\begin{tikzpicture}
		\node[i, label=left:$i$] (1) at (0,0) {};
		\node[v] (2) at (2,1) {};
		\node[v] (3) at (2,-1) {};
		\node[o, label=right:$\theta$] (4) at (4,0) {};
	
		\draw[e] (1) -- (2);
		\draw[e] (1) -- (3);
		\draw[e] (2) -- (4);
		\draw[e] (3) -- (4);
		\draw[e] (2) -- (3);
	\end{tikzpicture}

	\caption{MOV gadget structure}
	\label{fig:mov}
\end{figure}

The MOV gadget shown above simply moves the input color to the output. One can verify the mapping, which is only:
\begin{align*}
	i \lra i
\end{align*}

\subsubsection{NOT}
\begin{figure}[H]
	\centering
	\begin{tikzpicture}
		\node[a, label=left:$a_0$] (1) at (-1,0) {};
		\node[v] (2) at (1,0) {};
		\node[i, label=left:$i$] (3) at (2,-1) {};
		\node[o, label=right:$\theta$] (4) at (2,1) {};
	
		\draw[e] (1) -- (2);
		\draw[e] (2) -- (3);
		\draw[e] (2) -- (4);
		\draw[e] (3) -- (4);
	\end{tikzpicture}

	\caption{NOT gadget structure}
	\label{fig:not}
\end{figure}

The NOT gadget shown above implements Boolean NOT operation and has the following mapping:
\begin{align*}
	0 &\llr \{1, 2\} \\
\end{align*}

Notably, this is the smallest gadget that's capable of implementing a Boolean operation.

\subsubsection{k-NOT}
\begin{figure}[H]
	\centering
	\begin{tikzpicture}
		\node[i, label=left:$k$] (1) at (-1,0) {};
		\node[v] (2) at (1,0) {};
		\node[i, label=left:$i$] (3) at (2,-1) {};
		\node[o, label=right:$\theta$] (4) at (2,1) {};
	
		\draw[e] (1) -- (2);
		\draw[e] (2) -- (3);
		\draw[e] (2) -- (4);
		\draw[e] (3) -- (4);
	\end{tikzpicture}
	
	\caption{k-NOT gadget structure}
	\label{fig:k-not}
\end{figure}

After investigating some Boolean gates, we needed a primitive to express their behavior more concisely. Minimal gadgets have fewer vertices, which makes it easy for them to have some triangles. To address this, we generalize the NOT gadget, utilizing its triangle.
\bigskip

The \knot{k} gadget shown above has the following mapping:
\begin{align*}
	i = k &\lra S \setminus \{k\} \\
	\text{otherwise} &\lra k
\end{align*}

\subsubsection{ROT (Rotate)}
\begin{figure}[H]
	\centering
	\begin{tikzpicture}
		\node[i, label=left:$i$] (1) at (-1,1) {};
		\node[o, label=south:$\theta$] (2) at (0,0) {};
		\node[a, label=right:$a_0$] (3) at (1,1) {};
			
		\draw[e] (1) -- (2);
		\draw[e] (2) -- (3);
	\end{tikzpicture}

	\caption{ROT gadget structure}
	\label{fig:rot}
\end{figure}

The ROT gadget shown above has the following mapping:
\begin{align*}
	0 &\lra \{1, 2\} \\
	1 &\llr 2
\end{align*}
While mapping 0 to any true colors, this gadgets swaps the true colors between themselves.

\subsubsection{\rots (Rotate states 1 and 2)}
To take advantage of the asymmetric nature of the system, we tweak the idea of ROT, and develop a new gadget \rots. Unlike ROT, this gadget fixes 0 and swaps true colors between themselves.

\begin{figure}[H]
	\centering
	\begin{tikzpicture}
		\node[v] (1) at (0, 0) {};
		\node[v] (2) at (0, 2) {};
		\node[i, label=left:$i$] (3) at (-2, 0) {};
		\node[v] (4) at (-2, 2) {};
		\node[o, label=right:$\theta$] (5) at (1, 1) {};
		\node[v] (6) at (-1, 1) {};
		\node[a, label=left:$a_0$] (7) at (-1, 3) {};
		
		\draw[e] (1) -- (2);
		\draw[e] (1) -- (3);
		\draw[e] (1) -- (5);
		\draw[e] (1) -- (6);
		\draw[e] (2) -- (4);
		\draw[e] (2) -- (5);
		\draw[e] (2) -- (7);
		\draw[e] (3) -- (4);
		\draw[e] (3) -- (6);
		\draw[e] (7) -- (6);
		\draw[e] (7) -- (4);
	\end{tikzpicture}

	\caption{\rots gadget structure}
	\label{fig:rot-s}
\end{figure}

This gadget is also reversible, as one can verify the mapping after swapping the input with the output. This mapping also offers great flexibility when working with the asymmetric structure.
Let us investigate its structure by decomposing it into two \knot{k}s.

\begin{figure}[H]
	\centering
	\begin{subfigure}[t]{0.48\textwidth}
		\centering
		\begin{tikzpicture}
			\node[v, fill=blue!50] (1) at (0, 0) {};
			\node[v] (2) at (0, 2) {};
			\node[i, label=left:$i$] (3) at (-2, 0) {};
			\node[v] (4) at (-2, 2) {};
			\node[o, label=right:$\theta$] (5) at (1, 1) {};
			\node[v] (6) at (-1, 1) {};
			\node[a, label=above:$a_0$] (7) at (-1, 3) {};
			
			\draw[e] (1) -- (2);
			\draw[e] (1) -- (3);
			\draw[e] (1) -- (5);
			\draw[e] (1) -- (6);
			\draw[e] (2) -- (4);
			\draw[e] (2) -- (5);
			\draw[e] (2) -- (7);
			\draw[e] (3) -- (4);
			\draw[e] (3) -- (6);
			\draw[e] (7) -- (6);
			\draw[e] (7) -- (4);
			
			\draw[blue, thick] (0.3,-0.3) -- (0.3,0.3) -- (-0.7,1.3) -- (-0.7,3.3) -- (-1.3,3.3) -- (-1.3,1.3) -- (-2.3,0.3) -- (-2.3,-0.3) -- cycle;
			\draw[blue] (0.3, 0) node[right] {$\knot{0}\ i$};
		\end{tikzpicture}
	\end{subfigure}
	\hfill
	\begin{subfigure}[t]{0.48\textwidth}
		\centering
		\begin{tikzpicture}
			\node[v] (1) at (0, 0) {};
			\node[v, fill=blue!50] (2) at (0, 2) {};
			\node[i, label=left:$i$] (3) at (-2, 0) {};
			\node[v] (4) at (-2, 2) {};
			\node[o, label=right:$\theta$] (5) at (1, 1) {};
			\node[v] (6) at (-1, 1) {};
			\node[a, label=above:$a_0$] (7) at (-1, 3) {};
			
			\draw[e] (1) -- (2);
			\draw[e] (1) -- (3);
			\draw[e] (1) -- (5);
			\draw[e] (1) -- (6);
			\draw[e] (2) -- (4);
			\draw[e] (2) -- (5);
			\draw[e] (2) -- (7);
			\draw[e] (3) -- (4);
			\draw[e] (3) -- (6);
			\draw[e] (7) -- (6);
			\draw[e] (7) -- (4);
			
			\draw[blue, thick] (-1.7,-0.3) -- (-2.3,-0.3) -- (-2.3,2.3) -- (-1.3,3.3) -- (-0.7,3.3) -- (0.3,2.3) -- (0.3,1.7) -- (-1.7,1.7) -- cycle;
			\draw[blue] (0.3, 2) node[right] {$\knot{i}\ 0$};
		\end{tikzpicture}
	\end{subfigure}

\label{fig:rot-s-not}
\caption{Decomposition of \rots to two \knot{k}s}
\end{figure}

As shown above, the \rots can be decomposed into two \knot{k}s: $\knot{0}\ i$ and $\knot{i}\ 0$. Thus, these constraints are sufficient to fully describe the gadget, as it can be constructed completely from them:
\begin{enumerate}
	\item $k \in \knot{i}\ 0$
	\item $m \in \knot{0}\ i \setminus \{k\}$
	\item $\theta \in S \setminus \{k, m\}$
\end{enumerate}

\begin{theorem} \rots implements the following mapping:
	\begin{align*}
		0 &\lra 0 \\
		1 &\llr 2
	\end{align*}
\end{theorem}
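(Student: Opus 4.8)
The plan is to reduce the statement to the decomposition of \rots into the sub-gadgets $\knot{0}\ i$ and $\knot{i}\ 0$ established immediately above, and then to run a three-way case analysis on the input color $i \in S$. I will use constraints (1)--(3) together with the \knot{k} mapping, under which an input equal to the parameter $k$ produces $S \setminus \{k\}$ and any other input produces $k$. For a fixed input $i$, constraint (1) determines the admissible values of $k$, constraint (2) then determines the admissible values of $m$, and constraint (3) reads off the admissible outputs $\theta$; since constraints (1)--(3) were argued to describe \rots completely, the union of the resulting $\theta$-sets over all admissible pairs $(k,m)$ is exactly the behavior of the gadget on input $i$.

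First I would treat $i = 0$. Here both sub-gadgets specialize to $\knot{0}\ 0$, whose input coincides with its parameter, so each outputs $S \setminus \{0\} = \{1,2\}$. Hence $k \in \{1,2\}$ and $m \in \{1,2\} \setminus \{k\}$, which forces $\{k,m\} = \{1,2\}$ and therefore $\theta \in S \setminus \{1,2\} = \{0\}$, i.e. $0 \lra 0$. Next, for $i = 1$ the sub-gadget $\knot{1}\ 0$ has input $0 \neq 1$, so it outputs its parameter, giving $k = 1$; and $\knot{0}\ 1$ has input $1 \neq 0$, so it outputs $0$, giving $m \in \{0\} \setminus \{1\} = \{0\}$ and hence $m = 0$. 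Then $\theta \in S \setminus \{1,0\} = \{2\}$, i.e. $1 \lra 2$. The case $i = 2$ follows from $i = 1$ by exchanging the two interchangeable true colors $1$ and $2$, yielding $2 \lra 1$. Assembling the three cases gives exactly $0 \lra 0$ together with $1 \llr 2$.

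The only point requiring care beyond this mechanical computation is to confirm that every admissible triple $(k,m,\theta)$ obtained above is actually witnessed by a proper $3$-coloring of the whole \rots graph, not merely consistent with the local \knot{k} constraints. This reduces to checking that the internal vertices shared by the two \knot{k} copies -- the anchor and the white vertices meeting the input -- can be colored simultaneously, which follows from the realizability of each \knot{k} gadget in isolation and from the fact that the decomposition glues the two copies only along the input vertex and the anchor. I expect this gluing step to be the main (though still routine) obstacle; alternatively, one could shorten part of the verification by appealing to the reversibility of \rots observed above, since the asserted map is an involution.
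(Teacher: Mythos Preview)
Your proposal is correct and follows essentially the same route as the paper: both arguments feed the decomposition constraints (1)--(3) into the \knot{k} mapping and read off $\theta$. The only cosmetic differences are that the paper merges your $i=1$ and $i=2$ cases into a single $i\in T$ case, and that your extra paragraph on realizability of the admissible triples $(k,m,\theta)$ makes explicit what the paper simply asserts in the sentence ``these constraints are sufficient to fully describe the gadget.''
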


\begin{proof}
	We consider cases $i = 0$ and $i \in T$.
	
	\textbf{Case 1:} $i = 0$.
	Then $k \in T$ and $m \in T \setminus \{k\} = \text{ROT}\ k$. Since $k \in T$, $k \notin \text{ROT}\ k \subseteq T$ hence $k \neq m$. Thus, this yields $\theta = 0$.
	
	\textbf{Case 2:} $i \in T$.
	Then $k \in \knot{i}\ 0 = \{i\}$ and $m \in \knot{0}\ i \setminus \{k\} = \{0\}$. Thus, $\theta \in S \setminus \{i, 0\}$ yields $\theta \in T \setminus \{i\} = \text{ROT}\ i$.
	
	In summary, $i = 0$ corresponds to $\theta = 0$, and $i \in T$ corresponds to $\theta \in \text{ROT}\ i$.
\end{proof}

\section{Ladgets}
\subsection{Definition}
A \emph{ladget} (a \emph{logical gadget}) is a gadget that implements a Boolean function. Intuitively, a ladget $\fl$ acts as a machine that takes Boolean input values $(i_1, i_2, \ldots, i_n)$ and outputs a Boolean value $\theta$ that is identical across all possible 3-colorings for the same input assignment.

\begin{definition}[Ladget] A ladget $\fl$ is defined as follows:
	
	Let $\fl = (G, a_0, I, \theta)$, where $G = (V,E)$ is the graph, $a_0 \in V$ is the anchor vertex, $I = (i_1, i_2, \ldots, i_n)$ is the ordered tuple of input vertices, and $\theta \in V$ is the output vertex.
	
	We say that $\fl$ \emph{implements a Boolean function} $\varphi_\fl: \{0,1\}^n \to \{0,1\}$ where $\varphi_\fl$ \emph{depends on} every input, if the graph satisfies:
	\begin{enumerate}
		\item \textbf{Universality:} For every assignment of colors to the input vertices, there is a valid 3-coloring of $\fl$.
		\item \textbf{Consistency:} For every valid 3-coloring of $\fl$, if the input vertices have Boolean values $(v_1, v_2, \ldots, v_n)$, then the output vertex $\theta$ satisfies:
		\[
			\theta \equiv \varphi_\fl(v_1, v_2, \ldots, v_n)
		\]
	\end{enumerate}
\end{definition}

We refer to $|V|$ as the order of a ladget $\fl$, denoted by $|\fl|$.

\begin{definition}[Minimality]
	A ladget $\fl$ is called \emph{minimal} if there exists no other ladget implementing $\varphi_\fl$ in strictly smaller order than $|\fl|$.
\end{definition}


Now let us investigate several minimal ladgets in detail.

\subsection{Minimal Ladgets for Standard Operations}

\subsubsection{NAND}
\begin{figure}[H]
	\centering
	\begin{tikzpicture}
		\node[a, label=left:$a_0$] (1) at (0,0) {};
		\node[v] (2) at (1,0) {};
		\node[i, label=above:$i_1$] (3) at (2,1) {};
		\node[v] (4) at (2,-1) {};
		\node[o, label=45:$\theta$] (5) at (3,1) {};
		\node[v] (6) at (3,-1) {};
		\node[i, label=right:$i_2$] (7) at (4,0) {};
		
		\draw[e] (1) -- (2);
		\draw[e] (2) -- (3);
		\draw[e] (2) -- (4);
		\draw[e] (3) -- (4);
		\draw[e] (3) -- (5);
		\draw[e] (4) -- (6);
		\draw[e] (5) -- (6);
		\draw[e] (7) -- (5);
		\draw[e] (7) -- (6);
	\end{tikzpicture}
	
	\caption{$\ladt{NAND}$ ladget structure}
	\label{fig:nand}
\end{figure}

The NAND ladget $\ladt{NAND}$ shown above, has 7 vertices and it is one of the two minimal NAND ladgets according to our exhaustive search (Section~\ref{ssec:search-results}).

\begin{figure}[H]
	\centering
	\begin{subfigure}[t]{0.48\textwidth}
		\centering
		\begin{tikzpicture}
			\node[a, label=left:$a_0$] (1) at (0,0) {};
			\node[v] (2) at (1,0) {};
			\node[i, label=above:$i_1$] (3) at (2,1) {};
			\node[v, fill=blue!40] (4) at (2,-1) {};
			\node[o, label=45:$\theta$] (5) at (3,1) {};
			\node[v] (6) at (3,-1) {};
			\node[i, label=right:$i_2$] (7) at (4,0) {};
			
			\draw[e] (1) -- (2);
			\draw[e] (2) -- (3);
			\draw[e] (2) -- (4);
			\draw[e] (3) -- (4);
			\draw[e] (3) -- (5);
			\draw[e] (4) -- (6);
			\draw[e] (5) -- (6);
			\draw[e] (7) -- (5);
			\draw[e] (7) -- (6);
			
			\draw[blue, thick] (-0.3, -0.3) -- (-0.3, 0.3) -- (0.7, 0.3) -- (1.7, 1.3) -- (2.3, 1.3) -- (2.3, -1.3) -- (1.7, -1.3) -- (0.7, -0.3) -- cycle;
			\draw[blue] (1.7, -1.3) node[left] {$\knot{0}\ i_1$};
		\end{tikzpicture}
	\end{subfigure}
	\begin{subfigure}[t]{0.48\textwidth}
		\centering
		\begin{tikzpicture}
			\node[a, label=left:$a_0$] (1) at (0,0) {};
			\node[v] (2) at (1,0) {};
			\node[i, label=above:$i_1$] (3) at (2,1) {};
			\node[v] (4) at (2,-1) {};
			\node[o, label=45:$\theta$, fill=blue!40] (5) at (3,1) {};
			\node[v] (6) at (3,-1) {};
			\node[i, label=right:$i_2$] (7) at (4,0) {};
			
			\draw[e] (1) -- (2);
			\draw[e] (2) -- (3);
			\draw[e] (2) -- (4);
			\draw[e] (3) -- (4);
			\draw[e] (3) -- (5);
			\draw[e] (4) -- (6);
			\draw[e] (5) -- (6);
			\draw[e] (7) -- (5);
			\draw[e] (7) -- (6);
			
			\draw[blue, thick] (1.7, -0.7) -- (1.7, -1.3) -- (3.3, -1.3) -- (4.3, -0.3) -- (4.3, 0.3) -- (3.3, 1.3) -- (2.7, 1.3) -- (2.7, -0.7) -- cycle;
			\draw[blue] (1.7, -1.3) node[left] {$\knot{(\knot{0}\ i_1)}\ i_2$};
		\end{tikzpicture}
	\end{subfigure}
	
	\caption{$\ladt{NAND}$ decomposition}
	\label{fig:nand-decom}
\end{figure}

Using this decomposition, we can express $\theta$ as:
\[
	\theta \in \knot{(\knot{0}\ i_1)}\ i_2 \setminus \{i_1\}
\]

\begin{theorem} $\ladt{NAND}$ implements the Boolean NAND operation:
	\begin{align*}
		(1, 1) &\lra 0 \\
		\text{otherwise} &\lra 1
	\end{align*}
\end{theorem}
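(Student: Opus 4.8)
The plan is to argue directly from the decomposition $\theta \in \knot{(\knot{0}\ i_1)}\ i_2 \setminus \{i_1\}$ recorded above, viewing $\ladt{NAND}$ as an inner block $\knot{0}\ i_1$ (whose output is the blue vertex in the first panel of Figure~\ref{fig:nand-decom}), an outer block $\knot{(\cdot)}\ i_2$ (whose output is $\theta$), and the single extra edge $i_1 - \theta$. Following the pattern of the \rots proof, I would abbreviate the intermediate colour by $p := \knot{0}\ i_1$, so that a proper colouring is pinned down exactly by: (i) $p \in \knot{0}\ i_1$; (ii) $\theta \in \knot{p}\ i_2$; and (iii) $\theta \neq c(i_1)$ — recalling that in the single-anchor system a Boolean input $1$ may be carried by either colour of $T$. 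The middle vertex of each $k$-NOT block is internal to that block, and the two blocks share only the blue vertex, so any assignment satisfying (i)--(iii) extends to a proper $3$-colouring of $\ladt{NAND}$ and conversely; this faithfulness is the bookkeeping that makes the rest go through. I would then case-split, first on $c(i_1)$ and then on whether $c(i_2) = p$, invoking the $\knot{k}$ mapping (Figure~\ref{fig:k-not}) at each step.

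If $c(i_1) \in T$, then (i) forces $p = 0$. If moreover $c(i_2) \in T$, then $c(i_2) \neq p$, so the outer block gives $\theta = p = 0$, and (iii) is automatic — hence $\theta = 0$, the NAND value when both inputs are true. If instead $c(i_1) \in T$ but $c(i_2) = 0$, then $c(i_2) = p = 0$, so $\theta \in S \setminus \{0\} = T$, and (iii) pins $\theta$ to the single colour of $T \setminus \{c(i_1)\} = \text{ROT}\ c(i_1)$, which lies in $T$. If $c(i_1) = 0$, then $p$ is underdetermined: (i) only gives $p \in S \setminus \{0\} = T$, with both values of $p$ realisable by colouring the middle vertex of the inner block appropriately. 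For either such $p$, the outer block yields $\theta = p \in T$ when $c(i_2) \neq p$, and $\theta \in S \setminus \{p\} = \{0\} \cup (T \setminus \{p\})$ when $c(i_2) = p$; intersecting with the constraint $\theta \neq c(i_1) = 0$ from (iii) leaves a nonempty subset of $T$ in every subcase. Thus both inputs true forces $\theta = 0$, and any other input pattern forces $\theta \in T$, i.e. Boolean $1$; in each case the constraint set is nonempty and the internal vertices can be coloured (Universality), and every admissible $\theta$ carries the same Boolean value (Consistency), which is the claimed NAND mapping.

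The main obstacle is the family of cases with $c(i_1) = 0$, where the intermediate colour $p$ depends on the internal colouring rather than on the inputs: the point to nail down is that the Boolean value of the output — though not its colour — is the same whether $p = 1$ or $p = 2$, and that the extra constraint $\theta \neq c(i_1) = 0$ never empties the admissible set, which is true precisely because $0$ is the one colour the outer $\knot{p}$ block can always take. A lighter but necessary point is the faithfulness of the decomposition flagged above: a proper $3$-colouring of $\ladt{NAND}$ is the same data as a choice of colours for $i_1$ and $i_2$ together with admissible colourings of the two $k$-NOT blocks, because every edge of the graph lies in one of the two blocks or is the edge $i_1 - \theta$, each block's middle vertex is internal to it, and the blocks overlap only in the blue vertex.
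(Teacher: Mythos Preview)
Your argument is correct and rests on the same decomposition $\theta \in \knot{(\knot{0}\ i_1)}\ i_2 \setminus \{i_1\}$ that the paper uses; the only real difference is the direction of the case split. The paper argues backward from the output, splitting on $\theta \equiv 0$ versus $\theta \equiv 1$ and reading off which input patterns are compatible with each, whereas you argue forward from the inputs, splitting first on $c(i_1)$ and then on whether $c(i_2)$ coincides with the intermediate colour $p$. Your forward analysis buys you an explicit Universality check (each subcase visibly leaves a nonempty admissible set for $\theta$ and the internal vertices), which the paper's proof leaves implicit; the paper's backward split is terser because it needs only two top-level cases and avoids tracking the undetermined $p$ when $c(i_1)=0$. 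One small wording slip in your final paragraph: it is not that ``$0$ is the one colour the outer $\knot{p}$ block can always take'' --- when $c(i_2)\neq p$ the output is forced to $p\in T$ --- but rather that the outer block's admissible set always meets $T$, so excluding $0$ never empties it. This does not affect your main argument.
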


\begin{proof}
	We consider cases $\theta \equiv 0$ and $\theta \equiv 1$.
	
	\textbf{Case 1:} $\theta \equiv 0$.
	This implies that $i_1 \in T$. Since $\knot{0}\ i_1 = \{0\}$, $\theta \in \{0\} = \knot{0}\ i_2$, implying $i_2 \in T$.
	Thus, this yields $i_1 \equiv i_2 \equiv 1$.
	
	\textbf{Case 2:} $\theta \equiv 1$.
	Then $\knot{(\knot{0}\ i_1)}\ i_2 \setminus \{i_1\} \subseteq T$. $i_1 \in T$ requires $i_2 = 0$ since $\knot{0}\ i_1$ would equal $\{0\}$.
	Similarly, $i_2 \in T$ requires $i_1 = 0$ since we need $\knot{0}\ i_1 \subseteq T$. Also $i_1 = i_2 = 0$ follows with $\theta \in T$.
	Thus, this covers all the remaining cases.
	
	In summary, $\theta \equiv 0$ corresponds to $i_1 \equiv i_2 \equiv 1$, and $\theta \equiv 1$ corresponds to $i_1 \equiv 0$ or $i_2 \equiv 0$.
\end{proof}

\subsubsection{OR}
\begin{figure}[H]
	\centering
	\begin{tikzpicture}
		\node[a, label=left:$a_0$] (1) at (0,0) {};
		\node[v] (2) at (1,0) {};
		\node[v] (3) at (2,1) {};
		\node[i, label=left:$i_1$] (4) at (2,-1) {};
		\node[v] (5) at (4,1) {};
		\node[v] (6) at (4,-1) {};
		\node[i, label=right:$i_2$] (7) at (5,0) {};
		\node[o, label=right:$\theta$] (8) at (3, 2) {};
		
		\draw[e] (1) -- (2);
		\draw[e] (2) -- (3);
		\draw[e] (2) -- (4);
		\draw[e] (3) -- (4);
		\draw[e] (3) -- (5);
		\draw[e] (4) -- (6);
		\draw[e] (5) -- (6);
		\draw[e] (7) -- (5);
		\draw[e] (7) -- (6);
		\draw[e] (8) -- (3);
		\draw[e] (8) -- (5);
	\end{tikzpicture}
	
	\caption{$\ladt{OR}$ ladget structure}
	\label{fig:or}
\end{figure}

The OR ladget $\ladt{OR}$ shown above has 8 vertices and it is one of the two minimal OR ladgets according to our exhaustive search (Section~\ref{ssec:search-results}). Interestingly, this gadget resembles the NAND gate we analyzed earlier, as it only has one additional vertex and two edges, allowing us to decompose similarly using the Boolean operation NAND when investigating this graph:

\begin{figure}[H]
	\centering
	\begin{subfigure}[t]{0.48\textwidth}
		\centering
		\begin{tikzpicture}
			\node[a, label=left:$a_0$] (1) at (0,0) {};
			\node[v] (2) at (1,0) {};
			\node[v, label=above:$m$, fill=blue!40] (3) at (2,1) {};
			\node[i, label=below:$i_1$] (4) at (2,-1) {};
			\node[v, label=above:$k$, fill=blue!40] (5) at (4,1) {};
			\node[v] (6) at (4,-1) {};
			\node[i, label=right:$i_2$] (7) at (5,0) {};
			\node[o, label=right:$\theta$] (8) at (3, 2) {};
			
			\draw[e] (1) -- (2);
			\draw[e] (2) -- (3);
			\draw[e] (2) -- (4);
			\draw[e] (3) -- (4);
			\draw[e] (3) -- (5);
			\draw[e] (4) -- (6);
			\draw[e] (5) -- (6);
			\draw[e] (7) -- (5);
			\draw[e] (7) -- (6);
			\draw[e] (8) -- (3);
			\draw[e] (8) -- (5);
			
			\draw[blue, thick] (-0.3, -0.3) -- (-0.3, 0.3) -- (0.7, 0.3) -- (1.7, 1.3) -- (2.3, 1.3) -- (2.3, -1.3) -- (1.7, -1.3) -- (0.7, -0.3) -- cycle;
			\draw[blue] (1.7, 1.3) node[left] {$\text{NOT}\ i_1$};
		\end{tikzpicture}
	\end{subfigure}
	\begin{subfigure}[t]{0.48\textwidth}
		\centering
		\begin{tikzpicture}
			\node[a, label=left:$a_0$] (1) at (0,0) {};
			\node[v] (2) at (1,0) {};
			\node[v, fill=blue!40, label=above:$m$] (3) at (2,1) {};
			\node[i, label=below:$i_1$] (4) at (2,-1) {};
			\node[v, label=above:$k$, fill=blue!40] (5) at (4,1) {};
			\node[v] (6) at (4,-1) {};
			\node[i, label=right:$i_2$] (7) at (5,0) {};
			\node[o, label=right:$\theta$] (8) at (3, 2) {};
			
			\draw[e] (1) -- (2);
			\draw[e] (2) -- (3);
			\draw[e] (2) -- (4);
			\draw[e] (3) -- (4);
			\draw[e] (3) -- (5);
			\draw[e] (4) -- (6);
			\draw[e] (5) -- (6);
			\draw[e] (7) -- (5);
			\draw[e] (7) -- (6);
			\draw[e] (8) -- (3);
			\draw[e] (8) -- (5);
			
			\draw[blue, thick] (-0.3, -0.3) -- (-0.3, 0.3) -- (0.7, 0.3) -- (1.7, 1.3) -- (4.3, 1.3) -- (5.3, 0.3) -- (5.3, -0.3) -- (4.3, -1.3) -- (1.7, -1.3) -- (0.7, -0.3) -- cycle;
			\draw[blue] (4.3, 1.3) node[right] {$(\text{NOT}\ i_1)\ \text{NAND}\ i_2$};
		\end{tikzpicture}
	\end{subfigure}
	
	\caption{$\ladt{OR}$ decomposition}
	\label{fig:or-decom}
\end{figure}

Using the decomposition and labeling above, these constraints are sufficient to fully describe the gadget, as it can be constructed completely from them:
\begin{enumerate}
	\item $m \in \text{NOT}\ i_1$
	\item $k \in m\ \text{NAND}\ i_2$
	\item $\theta \in S \setminus \{m, k\}$
\end{enumerate}

\begin{theorem} $\ladt{OR}$ implements the Boolean OR operation:
	\begin{align*}
		(0, 0) &\lra 0 \\
		\text{otherwise} &\lra 1
	\end{align*}
\end{theorem}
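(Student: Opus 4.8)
The plan is to mirror the $\ladt{NAND}$ proof, working from the three decomposition constraints established above: $m \in \text{NOT}\ i_1$, $k \in m\ \text{NAND}\ i_2$, and $\theta \in S \setminus \{m, k\}$. I will use the already-proved tables: $\text{NOT}\ 0 = T$ and $\text{NOT}\ t = \{0\}$ for $t \in T$, and $t_1\ \text{NAND}\ t_2 \equiv 0$ precisely when both $t_1, t_2 \in T$. As in the paper's earlier proofs I split on the output, into the cases $\theta \equiv 0$ and $\theta \equiv 1$, and read off the forced input configurations. Universality is immediate, since each of the three decomposition steps selects from a nonempty set.

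For Case 1 ($\theta = 0$): from $\theta \in S \setminus \{m, k\}$ we get $m, k \in T$. Then $m \in T$ together with $m \in \text{NOT}\ i_1$ forces $i_1 = 0$ (otherwise $\text{NOT}\ i_1 = \{0\}$), and $k \in T$ together with $k \in m\ \text{NAND}\ i_2$ and $m \in T$ forces $i_2 = 0$ (otherwise the NAND output would be $0$, contradicting $k \in T$). Hence $i_1 \equiv i_2 \equiv 0$, matching $\text{OR}(0,0) = 0$.

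For Case 2 ($\theta \in T$): first observe that $\theta \in T$ rules out $\{m, k\} = \{1, 2\}$, so either $0 \in \{m, k\}$ or $m = k \in T$. If $m = 0$, then $0 \in \text{NOT}\ i_1$ forces $i_1 \in T$; if $k = 0$, then the NAND output being $0$ forces both $m \in T$ and $i_2 \in T$, and $m \in T$ in turn forces $i_1 = 0$ — in both situations $i_1 \equiv 1$ or $i_2 \equiv 1$. The remaining possibility $m = k \in T$ must be excluded: $m \in T$ forces $i_1 = 0$, and $k = m \in m\ \text{NAND}\ i_2$ forces $i_2 = 0$ (if $i_2 \in T$ the NAND output is $0 \ne m$); but then, for the sub-$\ladt{NAND}$ with inputs $(m, 0)$ and $m \in T$, its own decomposition gives $k \in \knot{(\knot{0}\ m)}\ 0 \setminus \{m\} = \knot{0}\ 0 \setminus \{m\} = T \setminus \{m\}$, so $k \ne m$, a contradiction. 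Thus Case 2 always yields $i_1 \equiv 1$ or $i_2 \equiv 1$, i.e. $\text{OR}(i_1, i_2) = 1$, and combining the two cases establishes the claimed mapping.

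I expect the only genuine obstacle to be the sub-case $m = k \in T$ inside Case 2: ruling it out needs more than the Boolean behaviour of $\ladt{NAND}$, since the NAND ladget as such only pins down the truth value of its output, not its color. One must descend to the finer $\knot{k}$-level description of $\ladt{NAND}$ to see that when its second input is $0$ and its first input is a true color $m$, the output color is forced to be the true color distinct from $m$. Everything else reduces to a short finite check driven by the $\text{NOT}$ and $\ladt{NAND}$ tables, so I would keep the write-up in the same terse case-analysis style used for $\ladt{NAND}$ and $\rots$.
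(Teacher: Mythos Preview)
Your argument is correct, but it takes a genuinely different route from the paper's. You case-split on the output $\theta$, exactly as the paper does for $\ladt{NAND}$ and $\ladt{AND}$; the paper, by contrast, splits on the \emph{input} $i_2$. When $i_2 \equiv 0$ the paper observes that the whole gadget collapses to the already-analysed $\rots$ primitive with input $i_1$, so $\theta \equiv i_1$ is immediate; when $i_2 \equiv 1$ it argues purely at the Boolean level that $k \equiv \text{NOT}\ m \equiv i_1$, so one of $m,k$ is logically $0$ and hence $\theta \equiv 1$. The trade-off is exactly the one you anticipated: your approach is more uniform with the other ladget proofs and does not require spotting the hidden $\rots$ sub-structure, but the price is the $m = k \in T$ sub-case, which cannot be eliminated from the Boolean behaviour of $\ladt{NAND}$ alone and forces you down to the $\knot{k}$-level description $k \in \knot{(\knot{0}\ m)}\ 0 \setminus \{m\} = T \setminus \{m\}$. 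The paper's $\rots$ observation sidesteps that sub-case entirely and gives a two-line proof, at the cost of relying on an earlier structural lemma you would otherwise not need.
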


\begin{proof}
	We consider cases $i_2 \equiv 0$ and $i_2 \equiv 1$.
	
	\textbf{Case 1:} $i_2 \equiv 0$.
	Then we observe that the graph structure is the same as $\rots$, resulting $\theta \in \rots\ i_1$.
	Thus, this yields $\theta \equiv i_1$.
	
	\textbf{Case 2:} $i_2 \equiv 1$.
	Then $k \in m\ \text{NAND}\ i_2$, following logically with $k \equiv \text{NOT}\ m \equiv \text{NOT}\ \text{NOT}\ i_1 \equiv i_1$. Since $m \equiv \text{NOT}\ i_1$ and $k \equiv i_1$, therefore either $m \equiv 0$ or $k \equiv 0$.
	Thus implying $\theta \equiv 1$.
	
	In summary, $\theta \equiv 0$ corresponds to $i_1 \equiv i_2 \equiv 0$, and $\theta \equiv 1$ corresponds to $i_1 \equiv 1$ or $i_2 \equiv 1$.
\end{proof}

\subsubsection{AND}
\begin{figure}[H]
	\centering
	\begin{tikzpicture}
		\node[i, label=left:$i_1$] (1) at (0,1) {};
		\node[i, label=left:$i_2$] (2) at (0,-1) {};
		\node[v] (3) at (1,0) {};
		\node[v] (4) at (2,2) {};
		\node[v] (5) at (2,-2) {};
		\node[a, label=below:$a_0$] (6) at (3,-3) {};
		\node[v] (7) at (4,-2) {};
		\node[o, label=right:$\theta$] (8) at (5, 1) {};
		
		\draw[e] (1) -- (3);
		\draw[e] (1) -- (4);
		\draw[e] (2) -- (3);
		\draw[e] (2) -- (5);
		\draw[e] (4) -- (5);
		\draw[e] (4) -- (8);
		\draw[e] (5) -- (6);
		\draw[e] (7) -- (8);
		\draw[e] (7) -- (6);
		\draw[e] (7) -- (3);
		\draw[e] (8) -- (3);
	\end{tikzpicture}
	
	\caption{$\ladt{AND}$ ladget structure}
	\label{fig:and}
\end{figure}

The AND ladget $\ladt{AND}$ shown above has 8 vertices and it is one of the three minimal AND ladgets according to our exhaustive search (Section~\ref{ssec:search-results}).

\begin{figure}[H]
	\centering
	\begin{tikzpicture}
		\node[i, label=left:$i_1$] (1) at (0,1) {};
		\node[i, label=left:$i_2$] (2) at (0,-1) {};
		\node[v, label=left:$l$, fill=blue!40] (3) at (1,0) {};
		\node[v, label=above:$m$, fill=blue!40] (4) at (2,2) {};
		\node[v, label=below:$k$, fill=blue!40] (5) at (2,-2) {};
		\node[a, label=left:$a_0$] (6) at (3,-3) {};
		\node[v] (7) at (4,-2) {};
		\node[o, fill=blue!40] (8) at (5, 1) {};
		
		\draw[e] (1) -- (3);
		\draw[e] (1) -- (4);
		\draw[e] (2) -- (3);
		\draw[e] (2) -- (5);
		\draw[e] (4) -- (5);
		\draw[e] (4) -- (8);
		\draw[e] (5) -- (6);
		\draw[e] (7) -- (8);
		\draw[e] (7) -- (6);
		\draw[e] (7) -- (3);
		\draw[e] (8) -- (3);
		
		\draw[blue, thick] (2.7, -3.3) -- (3.3, -3.3) -- (4.3, -2.3) -- (5.3, 0.7) -- (5.3, 1.3) -- (4.7, 1.3) -- (0.7, 0.3) -- (0.7, -0.3) -- (3.3, -2) -- (2.7, -2.7) -- cycle;
		\draw[blue] (5.3, 1) node[right] {$\text{NOT}\ l$};
	\end{tikzpicture}
	
	\caption{$\ladt{AND}$ labeling and decomposition}
	\label{fig:and-decom}
\end{figure}
Using the decomposition and labeling above, these constraints are sufficient to fully describe the gadget, as it can be constructed completely from them:
\begin{enumerate}
	\item $l \in S \setminus \{i_1, i_2\}$
	\item $k \in \text{ROT}\ i_2$
	\item $m \in S \setminus \{i_1, k\}$
	\item $\theta \in (\text{NOT}\ l) \setminus \{m\}$
\end{enumerate}

\begin{theorem} $\ladt{AND}$ implements the Boolean AND operation:
	\begin{align*}
		(1, 1) &\lra 1 \\
		\text{otherwise} &\lra 0
	\end{align*}
\end{theorem}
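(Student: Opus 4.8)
The plan is to reason directly from the four defining constraints (1)–(4), together with the actions of $\text{ROT}$ and $\text{NOT}$ already computed in Section 3: $\text{ROT}\ 0 = T$, $\text{ROT}\ 1 = \{2\}$, $\text{ROT}\ 2 = \{1\}$, and $\text{NOT}\ 0 = T$, $\text{NOT}\ 1 = \text{NOT}\ 2 = \{0\}$. The first step I would take is to isolate a governing dichotomy hidden in constraint (4), $\theta \in (\text{NOT}\ l)\setminus\{m\}$: if $l \in T$ then $\text{NOT}\ l = \{0\}$, so every valid coloring has $\theta = 0$ (and needs $m \neq 0$ to exist at all), whereas if $l = 0$ then $\text{NOT}\ l = T$, so every valid coloring has $\theta \in T$. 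Hence the theorem reduces to showing that, over all valid colorings, $l \in T$ is forced precisely when at least one input is $\equiv 0$, while $l = 0$ is forced when both inputs lie in $T$ — after which the consistency clause follows, and only the universality clause (existence of a valid coloring for each input assignment) remains to be checked separately.

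For \textbf{Case 1}, where $i_1 \equiv 0$ or $i_2 \equiv 0$, I would split into the sub-cases $(i_1,i_2) = (0,0)$, $(0,i_2)$ with $i_2 \in T$, and $(i_1,0)$ with $i_1 \in T$. In each, constraint (1) gives $l \in S\setminus\{i_1,i_2\}$, and since a $0$ appears among the inputs this set is either $T$ itself or a single true color; either way $l \in T$, so the dichotomy yields $\theta = 0$ in every valid coloring, matching $\text{AND}$ on these inputs. For universality I would exhibit one explicit satisfying tuple $(l,k,m,\theta)$ in each sub-case (for example $(1,1,2,0)$ when $(i_1,i_2)=(0,0)$), invoking the statement in the excerpt that (1)–(4) fully describe the gadget, so that a coloring exists exactly when the constraint system is satisfiable.

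For \textbf{Case 2}, where $i_1,i_2 \in T$, I want $\theta \in T$ in every valid coloring, i.e. $l = 0$ always. I would argue by contradiction: if $l \in T$ in some valid coloring, then (1) forces $l \neq i_1$ and $l \neq i_2$, and as all three colors lie in the two-element set $T$ this forces $i_1 = i_2$ (both equal the unique true color other than $l$). Then (2) gives $k \in \text{ROT}\ i_2 = T\setminus\{i_2\} = \{l\}$, so $k = l$; (3) then gives $m \in S\setminus\{i_1,l\} = \{0\}$, so $m = 0$; but with $l \in T$, constraint (4) forces $\theta \in \{0\}\setminus\{0\} = \emptyset$, a contradiction. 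Hence $l = 0$ in every valid coloring and $\theta \in T$, as required. Universality follows by taking $l = 0$ (permitted by (1) since $0 \notin T$): any admissible $k \in \text{ROT}\ i_2$ and $m \in S\setminus\{i_1,k\}$ then leave $\theta \in T\setminus\{m\} \neq \emptyset$.

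I expect the main obstacle to be the bookkeeping in Case 2, where one must chain all three of the later constraints and carefully track the tension between "the two true colors are interchangeable across colorings" and "within a fixed coloring they are distinct" — it is precisely this that forces $i_1 = i_2$ and then collapses $m$ to $0$, producing the contradiction. The universality verifications are routine once we use that (1)–(4) characterize the colorings, but they cannot be omitted, since a ladget must satisfy both universality and consistency.
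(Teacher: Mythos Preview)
Your argument is correct and rests on the same key observation as the paper --- that constraint~(4) makes the logical value of $\theta$ depend solely on whether $l \in T$ or $l = 0$ --- but you organize the proof in the forward direction (case-split on the inputs, determine $l$, read off $\theta$), whereas the paper argues backward (case-split on $\theta \equiv 0$ vs.\ $\theta \equiv 1$, determine $l$, read off the inputs). The backward split is slightly slicker: once $\theta \equiv 1$ forces $l = 0$, constraint~(1) immediately gives $i_1, i_2 \in T$, so the paper never needs your contradiction argument chaining (2)--(4) to rule out $l \in T$ when $i_1 = i_2 \in T$. On the other hand, your forward direction buys you something the paper's proof omits: you explicitly verify universality by exhibiting satisfying tuples, whereas the paper only checks consistency. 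Both proofs use the same decomposition and the same primitives, so this is a difference of organization rather than of idea.
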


\begin{proof}
	We consider cases $\theta \equiv 0$ and $\theta \equiv 1$.
	
	\textbf{Case 1:} $\theta \equiv 0$.
	Then $m \in T$ and $\text{NOT}\ l = \{0\}$, hence $l \in T$.
	Since $l \in S \setminus \{i_1, i_2\}$, this yields $i_1 \equiv 0$ or $i_2 \equiv 0$.
	
	\textbf{Case 2:} $\theta \equiv 1$.
	Then $(\text{NOT}\ l) \setminus \{m\} \subseteq T$ requires $l = 0$.
	Then $0 \in S \setminus \{i_1, i_2\}$, therefore $i_1 \not\equiv 0 \not\equiv i_2$.
	
	In summary, $\theta \equiv 0$ corresponds to $i_1 \equiv 0$ or $i_2 \equiv 0$, and $\theta \equiv 1$ corresponds to $i_1 \equiv i_2 \equiv 1$.
\end{proof} 

\subsubsection{XOR}
\begin{figure}[H]
	\centering
	\begin{tikzpicture}
		\node[a, label=right:$a_0$] (1) at (2,2) {};
		\node[v] (2) at (2,0) {};
		\node[v] (3) at (0,2) {};
		\node[o, label=right:$\theta$] (4) at (2,-2) {};
		\node[v] (5) at (1,-1) {};
		\node[v] (6) at (-1,1) {};
		\node[i, label=left:$i_1$] (7) at (-2,2) {};
		\node[v] (8) at (0, -2) {};
		\node[v] (9) at (-2, 0) {};
		\node[i, label=left:$i_2$] (10) at (-2, -2) {};
		
		\draw[e] (1) -- (2);
		\draw[e] (1) -- (3);
		\draw[e] (2) -- (3);
		\draw[e] (2) -- (4);
		\draw[e] (2) -- (5);
		\draw[e] (3) -- (6);
		\draw[e] (3) -- (7);
		\draw[e] (4) -- (5);
		\draw[e] (6) -- (7);
		\draw[e] (4) -- (8);
		\draw[e] (7) -- (9);
		\draw[e] (5) -- (9);
		\draw[e] (6) -- (8);
		\draw[e] (8) -- (10);
		\draw[e] (9) -- (10);
	\end{tikzpicture}
	
	\caption{$\ladt{XOR}$ ladget structure}
	\label{fig:xor}
\end{figure}

The XOR ladget $\ladt{XOR}$ shown above has 10 vertices and it is one of the four minimal XOR ladgets according to our exhaustive search (Section~\ref{ssec:search-results}).

\begin{figure}[H]
	\centering
	\begin{tikzpicture}
		\node[a, label=right:$a_0$] (1) at (2,2) {};
		\node[v, label=right:$\text{ROT }m$] (2) at (2,0) {};
		\node[v, label=above:$m$, fill=blue!40] (3) at (0,2) {};
		\node[o, label=right:$\theta$] (4) at (2,-2) {};
		\node[v] (5) at (1,-1) {};
		\node[v] (6) at (-1,1) {};
		\node[i, label=left:$i_1$] (7) at (-2,2) {};
		\node[v, label=below:$k$, fill=blue!40] (8) at (0, -2) {};
		\node[v, label=left:$l$, fill=blue!40] (9) at (-2, 0) {};
		\node[i, label=left:$i_2$] (10) at (-2, -2) {};
		
		\draw[e] (1) -- (2);
		\draw[e] (1) -- (3);
		\draw[e] (2) -- (3);
		\draw[e] (2) -- (4);
		\draw[e] (2) -- (5);
		\draw[e] (3) -- (6);
		\draw[e] (3) -- (7);
		\draw[e] (4) -- (5);
		\draw[e] (6) -- (7);
		\draw[e] (4) -- (8);
		\draw[e] (7) -- (9);
		\draw[e] (5) -- (9);
		\draw[e] (6) -- (8);
		\draw[e] (8) -- (10);
		\draw[e] (9) -- (10);
		
		\draw[blue, thick] (-2.3, 0.3) -- (-1.7, 0.3) -- (0.7, -0.7) -- (1.7, 0.3) -- (2.3, 0.3) -- (2.3, -2.3) -- (1.7, -2.3) -- (0.7, -1.3) -- (-1.7, -0.3) -- (-2.3, -0.3) -- cycle;
		
		\draw[blue, thick] (0.3, -2.3) -- (0.3, -1.7) -- (-0.7, 0.7) -- (0.3, 1.7) -- (0.3, 2.3) -- (-2.3, 2.3) -- (-2.3, 1.7) -- (-1.3, 0.7) -- (-0.3, -1.7) -- (-0.3, -2.3) -- cycle;
	\end{tikzpicture}
	
	\caption{$\ladt{XOR}$ labeling and decomposition}
	\label{fig:xor-decom}
\end{figure}
Using the decomposition and labeling above, these constraints are sufficient to fully describe the gadget, as it can be constructed completely from them:
\begin{enumerate}
	\item $k \neq i_2$
	\item $l \in S \setminus \{i_1, i_2\}$
	\item $m \in \knot{k}\ i_1 \setminus \{0\}$
	\item $\theta \in \knot{l}(\text{ROT }m) \setminus \{k\}$
\end{enumerate}

\begin{theorem} $\ladt{XOR}$ implements the Boolean XOR operation:
	\begin{align*}
		i_1 \equiv i_2 &\lra 0 \\
		i_1 \not\equiv i_2 &\lra 1
	\end{align*}
\end{theorem}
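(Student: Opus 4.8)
The plan is to argue directly from the four-constraint description of $\ladt{XOR}$ supplied by the decomposition, propagating the constraints through $k$, $l$, $m$ and $\theta$ for each Boolean input pattern. First I would record a preliminary fact: in every valid $3$-coloring $m \in T$. Indeed $m \in \knot{k}\ i_1 \setminus \{0\}$, and $\knot{k}\ i_1$ is either $\{k\}$ (when $i_1 \neq k$) or $S \setminus \{k\}$ (when $i_1 = k$); deleting $0$ forces $m \neq 0$, and inspecting the two shapes shows the surviving colors always lie in $T$ — in particular, if $i_1 \neq k$ then necessarily $k \in T$ and $m = k$, since $k = 0$ would leave $m$ with no legal value. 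Hence $\text{ROT}\ m$ is the unique true color distinct from $m$, which is exactly the set fed into the final constraint $\theta \in \knot{l}(\text{ROT}\ m) \setminus \{k\}$.

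Next I would use the interchangeability of the two true colors from Definition~\ref{def:sa-system}: composing a valid $3$-coloring with the transposition of colors $1$ and $2$ again yields a valid coloring (it fixes $a_0$ and permutes $T$ within itself) and it preserves Boolean values, so it is enough to check one representative of each orbit of input patterns. Writing $t$ for a true color and $\bar t$ for the other, the representatives are $(i_1,i_2) \in \{(0,0),\,(0,t),\,(t,0),\,(t,t),\,(t,\bar t)\}$. In each case constraints~1 and~2 immediately restrict $k$ and $l$ to at most two colors, constraint~3 then pins down $m$ (discarding any choice of $k$ that would leave $m$ ranging over the empty set), and constraint~4 pins down $\theta$ (discarding the bad remaining choices of $k$ and $l$ by the same empty-set mechanism). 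The upshot is $\theta = 0$ for $(0,0)$, $(t,t)$, $(t,\bar t)$ and $\theta \in T$ for $(0,t)$, $(t,0)$, which is the claimed XOR table; since some assignment survives in every case, universality holds, and $\varphi_{\ladt{XOR}}$ visibly depends on both inputs. I would also flag that for the $(0,t)$ pattern the intermediate vertex $m$ is not uniquely determined, yet both surviving values yield $\theta \in T$, so consistency at the Boolean level is unharmed even though the output \emph{color} is not constant.

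I expect the $(t,\bar t)$ case — the two inputs carrying the two \emph{distinct} true colors — to be the conceptual heart of the argument: this is the configuration on which a naive XOR construction breaks, because $i_1$ and $i_2$ appear as different colors yet must be read as the same Boolean value, so the output must still be $0$. Concretely, constraint~2 forces $l = 0$; constraint~3 then excludes $k = 0$ (it would give $m \in \{0\}\setminus\{0\} = \emptyset$) and forces $k = i_1$, $m = i_2$; and since $\text{ROT}\ m = \{i_1\}$ and $\knot{0}\ i_1 = \{0\}$, constraint~4 gives $\theta \in \{0\}\setminus\{i_1\} = \{0\}$. The only real care needed throughout is to keep \emph{all four} constraints in play — the constraint $k \neq i_2$ in particular — so that no spurious coloring slips past; the remaining cases are routine once the observation about $m$ and $\text{ROT}\ m$ is established.
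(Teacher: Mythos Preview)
Your proposal is correct and takes a genuinely different route from the paper. The paper argues \emph{backward} from the output: it splits on $\theta \equiv 0$ versus $\theta \equiv 1$ and, within each, on the position of $l$ relative to $\text{ROT}\ m$, deducing the input pattern in four subcases. You instead argue \emph{forward} from the inputs, first invoking the colour-swap automorphism $1 \leftrightarrow 2$ to cut the nine concrete input colourings down to five Boolean-orbit representatives, and then propagating the constraints through $k$, $l$, $m$, $\theta$ in each. Your direction has the advantage of verifying universality along the way (you exhibit a surviving assignment in every case), whereas the paper's backward analysis establishes only consistency and tacitly relies on the earlier claim that the four constraints ``fully describe'' the gadget. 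You also make explicit the conceptually delicate $(t,\bar t)$ case --- two inputs carrying distinct true colours that must nonetheless XOR to $0$ --- which in the paper's organisation is folded into Subcase~1.1 ($l = 0$) without being singled out. The trade-off is that the paper's output-driven split yields a slightly more uniform argument with less brute enumeration, and its subcase structure more directly isolates the structural reason each output value arises; your approach is more elementary and self-contained but correspondingly more mechanical.
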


\begin{proof}
	We consider cases $\theta \equiv 0$ and $\theta \equiv 1$.
	
	\textbf{Case 1:} $\theta \equiv 0$.
	Then $\knot{l}(\text{ROT }m) \setminus \{k\} = \{0\}$. Hence $k \in T$ and either $l = 0$ or $l \in \text{ROT }m$.
	
	\begin{itemize}
		\item \textbf{Subcase 1.1:} $l = 0$.
		Then $i_1 \neq 0 \neq i_2$ therefore $i_1 \equiv i_2 \equiv 1$.
		
		\item \textbf{Subcase 1.2:} $l \in \text{ROT }m$.
		Then $i_1, i_2 \notin \text{ROT }m$.
		Also $k = m$ since $m \in T$ and $\theta = \knot{l}(\text{ROT }m) \setminus \{k\}$ should output only 0.
		Since $i_2 \notin \text{ROT }m$ and $m = k \neq i_2$, $i_2 = 0$. 
		Also $k = m = \knot{m}\ i_1 \setminus \{0\}$ implies $i_1 \neq m$, hence $i_1 = 0$.
		Thus, $i_1 \equiv i_2 \equiv 0$.
	\end{itemize}
	
	\textbf{Case 2:} $\theta \equiv 1$.
	$\knot{l}(\text{ROT }m) \setminus \{k\} \subseteq T$ implies $l \in T$ since $m \in T$. Since $l \in T$, either $l \in \text{ROT }m$ or $l = m$.

	\begin{itemize}
		\item \textbf{Subcase 2.1:} $l \in \text{ROT }m$.
		This implies $k = 0$ since $\knot{l}(\text{ROT }m) = \{0, m\}$ but $\theta \in T$.
		Therefore $m \in \knot{0}\ i_1 \setminus \{0\}$.
		$0 = k \neq i_2$ and $0 \notin \knot{0}\ i_1$ yields $i_1 \equiv 0$ and $i_2 \equiv 1$.
		
		\item \textbf{Subcase 2.2:} $l = m$.
		This implies $k \neq m$ since $\knot{m}(\text{ROT }m) \setminus \{m\}$ would be $\emptyset$.
		Also $i_1 \neq m \neq i_2$. $m \in \knot{k}\ i_1 \setminus \{0\}$ implies $i_1 = k$, since $i_1 \neq k$ implies $m = k$ but $m \neq k$.
		$k \neq m$ therefore $k = 0$ or $k \in \text{ROT }m$.
		Since $i_1 = k$ and $i_2 \neq k$, $k = 0$ yields $i_1 = 0$ with $i_2 \in T$ and $k \in \text{ROT }m$ yields $i_1 \in \text{ROT }m$ with $i_2 = 0$.
		Thus, $i_1 \equiv 0 \not\equiv i_2$ or $i_1 \not\equiv 1 \equiv i_2$.
	\end{itemize}
	
	In summary, $\theta \equiv 0$ corresponds to $i_1 \equiv i_2$,
	and $\theta \equiv 1$ corresponds to $i_1 \not\equiv i_2$.
\end{proof}

In addition, we observe that $k \neq l$ in every case above. Therefore, connecting the vertices $k$ and $l$ retains the universality, yielding another minimal ladget that implements XOR.

Also, we flip the $\knot{0}\ \theta$ as follows and obtain the ladget $\ladt{XNOR}$ implementing XNOR:
\begin{figure}[H]
	\centering
	\begin{tikzpicture}
		\node[a, label=right:$a_0$] (1) at (2,2) {};
		\node[v] (2) at (2,0) {};
		\node[v] (3) at (0,2) {};
		\node[v] (4) at (2,-2) {};
		\node[o, label=right:$\theta$] (5) at (1,-1) {};
		\node[v] (6) at (-1,1) {};
		\node[i, label=left:$i_1$] (7) at (-2,2) {};
		\node[v, label=below:$k$] (8) at (0, -2) {};
		\node[v, label=left:$l$] (9) at (-2, 0) {};
		\node[i, label=left:$i_2$] (10) at (-2, -2) {};
		
		\draw[e] (1) -- (2);
		\draw[e] (1) -- (3);
		\draw[e] (2) -- (3);
		\draw[e] (2) -- (4);
		\draw[e] (2) -- (5);
		\draw[e] (3) -- (6);
		\draw[e] (3) -- (7);
		\draw[e] (4) -- (5);
		\draw[e] (6) -- (7);
		\draw[e] (4) -- (8);
		\draw[e] (7) -- (9);
		\draw[e] (5) -- (9);
		\draw[e] (6) -- (8);
		\draw[e] (8) -- (10);
		\draw[e] (9) -- (10);
		
		\draw[blue, thick] (1.7, 2.3) -- (1.7, 0.3) -- (0.7, -0.7) -- (0.7, -1.3) -- (1.7, -2.3) -- (2.3, -2.3) -- (2.3, 2.3) -- cycle;
	\end{tikzpicture}
	
	\caption{Obtaining $\ladt{XNOR}$ by flipping the $\knot{0}\ \theta$ in $\ladt{XOR}$}
	\label{fig:xnor}
\end{figure}

Likewise, we can also obtain another ladget that implements XNOR by connecting $k$ and $l$, since $k \neq l$. These two ladgets are the only minimal ladgets that implement XNOR.

\subsection{Structural Constraints}
\label{sec:structural-constraints}
Through analysis of valid ladgets, we identified several necessary structural 
properties that any minimal ladget must satisfy. These constraints can be used to efficiently filter (Section~\ref{ssec:verification}) the ladget space while preserving the valid minimal ladgets.

Let $G = (V, E)$ be the graph of ladget $\fl$ with designated vertices: anchor $a_0$, inputs $I = \{i_1, i_2, \ldots, i_n\}$, and output $\theta$.
We call vertices in the set $V \setminus (I \cup \{a_0, \theta\})$ \emph{internal vertices}.

\begin{theorem}[Universality Constraints]
	Let $\fl = (G, a_0, I, \theta)$ be a ladget, then:
	\begin{enumerate}
		\item Any two inputs are non-adjacent
		\item $a_0$ is not adjacent to any of the inputs
		\item No internal vertex is adjacent to three vertices of $I \cup \{a_0\}$
	\end{enumerate}
\end{theorem}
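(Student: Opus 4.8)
The plan is to prove each of the three universality constraints by contradiction: in each case I would assume the forbidden adjacency exists and then exhibit an input assignment (in the Boolean sense, i.e.\ fixing colors to the input vertices) under which no proper $3$-coloring of $G$ extends the input coloring, violating the Universality requirement of the ladget definition.

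\textbf{Constraint 1 (inputs non-adjacent).} Suppose $i_j$ and $i_\ell$ are adjacent for some $j \neq \ell$. Since a ladget depends on every input, $\varphi_\fl$ depends on both $i_j$ and $i_\ell$; in particular each Boolean value is attainable. Pick the assignment $v_j = v_\ell = 0$. Both $i_j$ and $i_\ell$ are then colored $0$, but they are adjacent, so this is not a proper coloring — no extension exists, contradicting Universality. (More carefully: Universality says \emph{every} assignment of colors to the input vertices yields a valid $3$-coloring; coloring both endpoints of an edge with $0$ already fails, so the contradiction is immediate.)

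\textbf{Constraint 2 ($a_0$ not adjacent to any input).} Suppose $a_0 \sim i_j$. The anchor is fixed to color $0$. Choose the assignment with $v_j = 0$, so $c(i_j) = 0 = c(a_0)$ on an edge — again no proper $3$-coloring can extend this, contradicting Universality.

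\textbf{Constraint 3 (no internal vertex adjacent to three vertices of $I \cup \{a_0\}$).} Suppose an internal vertex $w$ is adjacent to three distinct vertices $x, y, z \in I \cup \{a_0\}$. The key observation is that $I \cup \{a_0\}$ is an independent set by Constraints~1 and~2 together with Constraint~1 applied among inputs, so $x,y,z$ impose no constraints on each other — hence we are free to assign them the three \emph{distinct} colors $0, 1, 2$ (if $a_0 \in \{x,y,z\}$ it already carries color $0$, and we assign $1$ and $2$ to the remaining two inputs; this is a legitimate Boolean input assignment since each input's color determines its Boolean value and all inputs are free). Then $w$ sees all three colors on its neighborhood, so there is no available color for $w$, and no proper $3$-coloring extends the assignment — contradiction.

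The only subtlety, and the step I would be most careful about, is Constraint~3: I must verify that assigning three distinct colors to $x, y, z$ is actually realizable as (part of) an input assignment of the sort Universality quantifies over. This needs (i) that inputs can independently take any color in $S$ — true because Universality quantifies over \emph{all} colorings of the input vertices, not merely Boolean-consistent ones — and (ii) that when $a_0$ is among the three, its forced color $0$ does not clash, which is automatic since we then only need to place $1$ and $2$ on two free inputs. I would also note in passing that Constraints~1 and~2 are exactly what guarantees $I \cup \{a_0\}$ is independent, so the three neighbors of $w$ can indeed be simultaneously recolored without creating a conflict among themselves; this is why the three constraints are naturally proved in this order.
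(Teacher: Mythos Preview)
Your proof is correct and follows essentially the same approach as the paper: for each constraint, assume the forbidden adjacency and exhibit an input color assignment that cannot extend to a proper $3$-coloring, contradicting Universality. The paper's version is slightly leaner---it does not invoke dependence on inputs in Constraint~1, nor independence of $I\cup\{a_0\}$ in Constraint~3 (since you assign \emph{distinct} colors to $x,y,z$, their mutual non-adjacency is irrelevant)---but these extras do no harm.
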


\begin{proof} Any violation would contradict universality.
	\begin{enumerate}
		\item Let $i_j, i_k \in I$ be adjacent. Then any valid 3-coloring must satisfy $c(i_j) \neq c(i_k)$. This forbids the input assignment $i_j = i_k$, violating universality.
		
		\item Let $i_j \in I$ and $a_0$ be adjacent. Then any valid 3-coloring must satisfy $c(i_j) \neq 0$. This forbids the input assignment $i_j = 0$, violating universality.
		
		\item Let an internal vertex $v$ be adjacent to three distinct vertices $k, l, m \in I \cup \{a_0\}$, then $c(v)$ must be different than $c(k), c(l), c(m)$.
		\begin{itemize}
			\item If one of $k, l, m$ is $a_0$, let $k = a_0$, then the assignment $l = 1, m = 2$ leaves no available color for $v$, violating universality.
			\item If all three are inputs, the assignment $k = 0, l = 1, m = 2$ leaves no available color for $v$, violating universality.
		\end{itemize}
	\end{enumerate}
\end{proof}

\begin{theorem}[Non-trivial Output]
	\label{thm:nontrivial-output}
	Let $\fl = (G, a_0, I, \theta)$ be a ladget, then $a_0$ and $\theta$ are non-adjacent.
\end{theorem}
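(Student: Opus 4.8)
The plan is to deduce the statement directly from the two defining clauses of a ladget, arguing by contradiction. Suppose, toward a contradiction, that $a_0$ and $\theta$ are adjacent. Since $a_0$ is pinned to color $0$ and $a_0 \sim \theta$, properness forces $c(\theta) \neq 0$ in every valid $3$-coloring of $\fl$, so $c(\theta) \in T$ always; equivalently, $\theta \equiv 1$ no matter how the inputs are colored.

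Next I would combine this with universality and consistency. Fix an arbitrary Boolean assignment $(v_1, \ldots, v_n)$ to the inputs. By universality there is at least one valid $3$-coloring of $\fl$ realizing this assignment, and by consistency its output vertex satisfies $\theta \equiv \varphi_\fl(v_1, \ldots, v_n)$. Since the previous paragraph shows $\theta \equiv 1$ in any such coloring, we get $\varphi_\fl(v_1, \ldots, v_n) = 1$, and as the assignment was arbitrary, $\varphi_\fl$ is the constant function $1$.

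Finally, a constant function does not depend on any of its inputs: flipping $v_1$ from $0$ to $1$ does not change its value. This contradicts the clause in the definition of ``implements a Boolean function'' requiring $\varphi_\fl$ to depend on every input, a clause that is nonvacuous because a ladget carries at least one input vertex. Hence the assumption fails and $a_0$ and $\theta$ are non-adjacent.

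I expect no genuine obstacle here; the result is essentially an unwinding of the definitions, and it is the natural companion of the Universality Constraints theorem, since an adjacency that pins $\theta$ to a truth value is exactly as fatal as one that forbids an input assignment. The only point worth a remark is the degenerate ``ladget'' with an empty input tuple, for which ``depends on every input'' is vacuous; such objects are excluded by the convention that a ladget implements a Boolean function of at least one variable, so they do not affect the statement.
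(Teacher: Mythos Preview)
Your argument is correct and follows essentially the same line as the paper: assume $a_0 \sim \theta$, deduce $\theta \equiv 1$ in every valid coloring, conclude $\varphi_\fl$ is the constant $1$, and obtain a contradiction with the requirement that $\varphi_\fl$ depend on every input. Your explicit appeal to universality to guarantee a witnessing coloring for each Boolean assignment is a nice bit of extra care, but otherwise the two proofs coincide.
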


\begin{proof}
	Let $\theta$ and $a_0$ be adjacent. Then any valid 3-coloring must satisfy $c(\theta) \neq c(a_0) = 0$.
	Thus, regardless of the input assignment, $\theta \equiv 1$, and $\fl$ implements a constant Boolean function $\varphi_\fl(i_1, \ldots, i_n) \equiv 1$, independent of any input.
\end{proof}

\begin{theorem}[Degree Constraints]
	Let $\fl = (G, a_0, I, \theta)$ be a ladget, then:
	\begin{enumerate}
		\item $\deg(\theta) \geq 2$
		\item If $\fl$ is minimal, then for any internal vertex $v$, $\deg(v) \geq 3$
	\end{enumerate}
\end{theorem}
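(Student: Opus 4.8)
The plan is to prove the two parts separately, both by contradiction, exploiting that a vertex with few neighbors always has a free color.

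For part 1, suppose $\deg(\theta) \le 1$. If $\theta$ is isolated, then in any valid $3$-coloring --- one exists for every input assignment by universality --- we may recolor $\theta$ to any of the three colors while staying valid, so for a fixed input assignment $\theta$ can be made both $\equiv 0$ and $\equiv 1$, contradicting consistency. If $\deg(\theta) = 1$ with neighbor $u$, then $u \ne a_0$ by Theorem~\ref{thm:nontrivial-output}; fixing any input assignment and a valid coloring $c$, write $c_0 = c(u)$ and note that $\theta$ may be freely recolored to either element of the two-element set $S \setminus \{c_0\}$. Consistency forces both elements to have the same Boolean value, which is possible only if $S \setminus \{c_0\} = T$, i.e.\ $c_0 = 0$. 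Hence $u$ is colored $0$ in every valid coloring of every input assignment, so $\theta \equiv 1$ always and $\varphi_\fl$ is constant, contradicting that it depends on every input.

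For part 2, let $v$ be an internal vertex of a minimal ladget $\fl$ with $\deg(v) \le 2$, and consider $G' = G - v$. The key observation is that, since $v$ has at most two neighbors and $|S| = 3$, every proper $3$-coloring of $G'$ extends to a proper $3$-coloring of $G$ by assigning $v$ some color avoided by its neighbors, while conversely every proper $3$-coloring of $G$ restricts to one of $G'$. Thus $G$ and $G'$ realize exactly the same colorings on $V \setminus \{v\}$. Since $v$ is internal we have $v \notin I$, $v \ne \theta$ and $v \ne a_0$, so $\fl' = (G', a_0, I, \theta)$ has the same anchor, inputs and output; it therefore satisfies universality and consistency with the very same Boolean function and still depends on every input, i.e.\ $\fl'$ is a ladget implementing $\varphi_\fl$ of order $|\fl'| = |\fl| - 1 < |\fl|$, contradicting minimality. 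The three values $\deg(v) \in \{0,1,2\}$ are all handled by this single argument (when the two neighbors are adjacent the color of $v$ is in fact forced, but we never use this).

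The routine part is verifying that $\fl'$ still meets the ladget axioms; the only points worth spelling out are the coloring-extension correspondence and the mildly subtle observation that ``depends on every input'' transfers for free because $\varphi_{\fl'}$ equals $\varphi_\fl$ as a function (so disconnecting an input from $\theta$ cannot happen). I do not expect a genuine obstacle here: every step is of the ``a low-degree vertex can always be colored, hence imposes no constraint on the rest of the graph'' type, with Theorem~\ref{thm:nontrivial-output} supplying the one configuration --- $\theta$ adjacent only to $a_0$ --- in which the recoloring argument of part 1 would otherwise break down.
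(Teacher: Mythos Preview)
Your proof is correct and follows essentially the same approach as the paper's: a recoloring/freedom argument for $\theta$ in part~1 and a redundancy-removal argument for low-degree internal vertices in part~2. Your version is in fact a bit more thorough---you explicitly dispose of the isolated-$\theta$ case and are more careful about the quantification over colorings---but the core ideas coincide.
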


\begin{proof} We consider the two conditions separately.
	\begin{enumerate}
		\item Suppose that $\deg(\theta) = 1$ and let $v \in V$ be the only vertex adjacent to $\theta$.
		Then $c(\theta) \neq c(v)$.
		\begin{itemize}
			\item If $c(v) \in \{1, 2\}$, then $\theta$ could take more than one color not responding to a single logical value, violating the definition.
			\item If $c(v) = 0$, then $\theta \equiv 1$ regardless of the input assignment, contradicting the non-trivial output requirement. (Theorem~\ref{thm:nontrivial-output})
		\end{itemize}
		
		\item Suppose that $\fl$ is minimal and let $v$ be an internal vertex with $\deg(v) \leq 2$.
		Then $v$ has at most 2 neighbors, therefore it is possible to always find a color assignment for $v$, not enforcing any rules between the adjacent vertices.
		Thus, $v$ is redundant and can be removed to obtain a smaller ladget $\fl'$, contradicting the minimality of $\fl$.
	\end{enumerate}
\end{proof}

\begin{theorem}[Input Degree]
	Let $\fl = (G, a_0, I, \theta)$ be a ladget, then for all inputs $i_j \in I$, $\deg(i_j) \geq 2$.
\end{theorem}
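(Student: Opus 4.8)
The plan is to argue by contradiction: assume some input $i_j \in I$ has $\deg(i_j) \le 1$ and derive that $\varphi_\fl$ cannot depend on $i_j$, contradicting the requirement in the definition of a ladget that $\varphi_\fl$ depend on every input. The mechanism is to show that for any fixed Boolean assignment of the remaining inputs, the two assignments differing only in the value of $i_j$ can be realized by 3-colorings that agree on every vertex except $i_j$ itself, so they force the same color (hence the same logical value) on $\theta$.

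First I would fix arbitrary Boolean values for the inputs other than $i_j$. By universality there is a valid 3-coloring $c$ of $\fl$ in which those inputs carry the chosen values and $c(i_j) = 0$, so that $i_j \equiv 0$. Next I would recolor the single vertex $i_j$ so as to flip its logical value to $1$ while keeping the coloring proper. If $\deg(i_j) = 0$ this is trivial: set $c(i_j) = 1$. If $\deg(i_j) = 1$ with unique neighbor $v$, then properness of $c$ already gives $c(v) \ne 0$, hence $c(v) \in \{1,2\}$, and I can set $c(i_j)$ to the unique color in $\{1,2\} \setminus \{c(v)\}$; the only edge incident to $i_j$ stays properly colored. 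In either case all vertices other than $i_j$ retain their colors, in particular $c(\theta)$ is unchanged, so the two colorings — having the same values on every input except $i_j$ — witness, via the consistency condition, that $\varphi_\fl(\dots, 0, \dots) \equiv c(\theta) \equiv \varphi_\fl(\dots, 1, \dots)$.

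Since the values of the other inputs were arbitrary, this shows $\varphi_\fl$ is independent of $i_j$, the desired contradiction; therefore $\deg(i_j) \ge 2$ for every input.

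The only delicate point — and where the single-anchor asymmetry is actually used — is the recoloring step: logical value $0$ pins down an exact color, while logical value $1$ leaves a genuine choice between two colors, and a vertex of degree at most $1$ always has at least two admissible colors, so at least one admissible color is a true color avoiding the (single) neighbor. I do not expect a serious obstacle here; the main thing to be careful about is that the recoloring modifies $i_j$ only, so that consistency can legitimately be invoked for both colorings against the same fixed assignment of the remaining inputs.
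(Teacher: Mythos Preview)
Your argument is correct and follows the same overall strategy as the paper --- assume $\deg(i_j)\le 1$ and show that $\varphi_\fl$ is then independent of $i_j$ --- but your execution is cleaner and more direct. The paper focuses on the unique neighbor $v$ and argues, somewhat informally, that the sets of colors available to $v$ under two distinct colors of $i_j$ always intersect, so $v$ (and hence the rest of the graph) cannot distinguish the two inputs; it never explicitly builds the second coloring, and it does not treat the $\deg(i_j)=0$ case. You instead fix a full valid coloring with $c(i_j)=0$ and then recolor only $i_j$ to a true color, which immediately yields two proper colorings that agree on every vertex except $i_j$, in particular on $\theta$; consistency then gives $\varphi_\fl(\ldots,0,\ldots)=\varphi_\fl(\ldots,1,\ldots)$ directly. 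This is a tighter use of the universality/consistency axioms and avoids the paper's slightly hand-wavy passage about ``even if $v$ is forced to take a different color.'' The one point worth making explicit in your write-up is that universality is stated for color assignments, so when you ``fix arbitrary Boolean values for the other inputs'' you should say you realize them by concrete colors (e.g.\ $0$ for false, $1$ for true) before invoking universality; you clearly intend this, but spelling it out removes any ambiguity.
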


\begin{proof}
	For an input $i_j \in I$, suppose that $\deg(i_j) = 1$ and $\varphi_\fl$ depends on $i_j$. 
	Then $v \in V$ is the only vertex adjacent to $i_j$. Since $i_j$ is adjacent to only $v$, the color of $i_j$ propagates through the graph with only $v$, therefore there are no other vertices that has any information of $i_j$'s color that could constrain $v$ based on $i_j$.
	Consider two different colors $c_1, c_2 \in S$:
	\begin{itemize}
		\item If $i_j = c_1$, then $v \in S \setminus \{c_1\}$
		\item If $i_j = c_2$, then $v \in S \setminus \{c_2\}$
	\end{itemize}
	Since $(S \setminus \{c_1\}) \cap (S \setminus \{c_2\}) \neq \emptyset$, there exist one color $q_1$ in the intersection that $v$ can take under both input assignments.
	Even if $v$ is forced to take a different color, there exist a different color $q_2$ under a different input assignment since $v$ can't be constrained with 2 different colors as this would fix $v$ to a color regardless of $i_j$, making $\varphi_\fl$ independent of $i_j$.
	Thus, $v = q_1$ is possible in 2 input assignments, making it indistinguishable between 2 different inputs, rendering $\varphi_\fl$ independent of $i_j$. This contradiction shows that $\deg(i_j) \ge 2$.
\end{proof}

\section{Results} \label{sec:results}
\subsection{Exhaustive Search}
We performed exhaustive enumeration, generating all non-isomorphic connected graphs with 7-10 vertices using the \texttt{geng} tool from the \texttt{nauty} package \cite{nauty}, identifying all minimal ladgets implementing NAND, AND, OR, NOR, XOR, and XNOR.

\subsubsection{Verification} \label{ssec:verification}
For a 2-input gadget with $n$ vertices, each configuration includes:
\begin{itemize}
	\item 1 anchor vertex
	\item 2 input vertices
	\item 1 output vertex
	\item $n-4$ internal vertices
\end{itemize}

For each configuration $(a_0, i_1, i_2, \theta)$, we verified whether the graph implements a Boolean function as follows:
\begin{enumerate}
	\item \textbf{Structural check:} Verify that the gadget configuration meets the structural requirements presented in Section~\ref{sec:structural-constraints}.
	
	\item \textbf{Universality check:} For each input assignment $(i_1, i_2) \in \{0, 1, 2\}^2$, verify that a valid 3-coloring exists.
	
	\item \textbf{Consistency check:} For all valid colorings with the same input assignment, verify that the output $\theta$ has the same logical value.
	
	\item \textbf{Truth table comparison:} Compare the resulting truth table with the target Boolean function.
\end{enumerate}

In the structural check step, we were able to filter out approximately 98.8\% of the gadget space with 10 vertices, reducing the overall computation time by roughly 8 times.

\subsubsection{Implementation}
The search was implemented in Python using NetworkX \cite{networkx} with nauty's geng tool.

The verification was parallelized across 19 CPU cores, and the most computationally intensive search (10 vertices, approximately 29.5 billion configurations) required approximately 1 hour (previously 8 hours without structural checks) on Intel Core i9-13900H processor with 32 GB of memory.

\subsection{Search Results} \label{ssec:search-results}
\begin{table}[H]
	\centering
	\begin{tabular}{|l|c|c|c|c|c|}
		\hline
		Function & Vertices & Count & Rarity (in Graphs) & Rarity (in Configs) & Rarity (filtered) \\
		\hline
		NOT  & 4  & 1  & 1 in 6                & 1 in 72                   & 1 in 2              \\
		NAND & 7  & 2  & 1 in 426              & 1 in $1.79\times10^5$     & 1 in 622            \\
		AND  & 8  & 3  & 1 in 3{,}705          & 1 in $3.1\times10^6$      & 1 in 19{,}417       \\
		OR   & 8  & 2  & 1 in 5{,}558          & 1 in $4.6\times10^6$      & 1 in 29{,}126       \\
		NOR  & 10 & 20 & 1 in $5.85\times10^5$ & 1 in $1.4\times10^9$      & 1 in $1.75\times10^7$ \\
		XOR  & 10 & 4  & 1 in $2.9\times10^6$  & 1 in $7.3\times10^9$      & 1 in $8.76\times10^7$ \\
		XNOR & 10 & 2  & 1 in $5.8\times10^6$  & 1 in $1.47\times10^{10}$  & 1 in $1.75\times10^8$ \\
		\hline
	\end{tabular}
	\caption{Minimal implementations found through exhaustive search}
	\label{tab:results}
\end{table}

The full list of minimal implementations are available with graph6 strings in Appendix~\ref{app:ladgetlist}.
\bigskip

\textbf{Note:} While geng ensures graphs are non-isomorphic, different gadget configurations within the same graph may be isomorphic to each other under graph automorphisms. For instance, if a graph has reflectional symmetry (as in the case of $\ladt{XOR}$), two configurations that are reflections of each other implement the same function. The counts reported include only non-isomorphic gadgets.

\subsection{Rarity}
The exhaustive search reveals that ladgets are exceptionally rare in the space of all graphs. As shown in Table~\ref{tab:results}, the probability of a random graph configuration implementing a specific Boolean function decreases dramatically with gate complexity.

The most striking case is XNOR: of the approximately 29 billion 10-vertex configurations examined, only two non-isomorphic implementations exist. This corresponds to a rarity of 1 in 14.7 billion configurations (1 in 175.2 million after filtering out gadgets not meeting the structural requirements), or 1 in 5.8 million graphs. This extreme rarity demonstrates that logical expressiveness imposes harsh constraints on gadgets.


\section{Embedding into $k$-Coloring}
We present a simple technique to embed any 3-coloring ladget $\fl$ with $|I| \leq 2$ into $k$-coloring while preserving its Boolean function $\varphi_\fl$.

Let $\fl = (G, a_0, I, \theta)$ be a 3-coloring ladget with $|I| \leq 2$ and graph $G = (V, E)$.
We construct an \emph{embedding package} $\fe$, defined as the complete graph $K_{k - 3}$.
We then connect every vertex of $\fe$ to every vertex of $G$, obtaining a new graph $G'$ and then define a new $k$-coloring ladget $\fl_k = (G', a_0, I, \theta)$.
In any valid $k$-coloring, the vertices of $\fe$ occupy $k - 3$ distinct colors different from the anchor and input vertices, since all vertices in $\fe$ are adjacent to all vertices in $I \cup \{a_0\}$; these colors are considered redundant.
Since every vertex of $G$ is adjacent to all vertices of $\fe$, the additional $k - 3$ redundant colors are forbidden on $V$, restricting $G$ to a 3-coloring palette within the $k$-coloring scheme.

Thus, this construction embeds any 3-coloring ladget with less than 3 inputs into a valid $k$-coloring ladget without altering its logical behavior.
\bigskip

Below is an example embedding of $\ladt{NOT}$ into 5-coloring:
\begin{figure}[H]
	\centering
	\begin{subfigure}[t]{0.48\textwidth}
		\centering
		\begin{tikzpicture}
			\node[a, label=left:$a_0$] (1) at (-0.5,0.4) {};
			\node[v] (2) at (2,0.8) {};
			\node[i, label=right:$i$] (3) at (2.6,-1) {};
			\node[o, label=right:$\theta$] (4) at (3,0) {};
			
			\node[v] (5) at (1,0) {};
			\node[v] (6) at (1.4,-1) {};
			
			\draw[e] (1) -- (2);
			\draw[e] (2) -- (3);
			\draw[e] (2) -- (4);
			\draw[e] (3) -- (4);
			
			\draw[e] (5) -- (6);
			
			\foreach \x in {1, 2, 3, 4}{
				\draw[e, dotted] (\x) -- (5);
				\draw[e, dotted] (\x) -- (6);
			}
			
			\draw[blue, thick] (0.6, 0.3) -- (1.3, 0.3) -- (1.8,-1.3) -- (1.1, -1.3) -- cycle;
			\draw[blue] (1, -1.1) node[left] {$\fe$};
			
		\end{tikzpicture}
		\caption{Embedding package $\fe$ and $\ladt{NOT}$}
	\end{subfigure}
	\begin{subfigure}[t]{0.48\textwidth}
		\centering
		\begin{tikzpicture}
			\node[a, label=left:$a_0$] (1) at (-0.5,0.4) {};
			\node[v] (2) at (2,0.8) {};
			\node[i, label=right:$i$] (3) at (2.6,-1) {};
			\node[o, label=right:$\theta$] (4) at (3,0) {};
			
			\node[v] (5) at (1,0) {};
			\node[v] (6) at (1.4,-1) {};
			
			\draw[e] (1) -- (2);
			\draw[e] (2) -- (3);
			\draw[e] (2) -- (4);
			\draw[e] (3) -- (4);
			
			\draw[e] (5) -- (6);
			
			\foreach \x in {1, 2, 3, 4}{
				\draw[e] (\x) -- (5);
				\draw[e] (\x) -- (6);
			}
		\end{tikzpicture}
		\caption{After embedding}
	\end{subfigure}
	
	\caption{Embedding of $\ladt{NOT}$ into 5-coloring}
	\label{fig:not-4col}
\end{figure}


\bibliographystyle{plain}
\bibliography{refs}

\textit{Author e-mail address:}
\texttt{fikretgungor@hacettepe.edu.tr}

\clearpage
\appendixpage
\appendix

\section{List of Minimal Ladget Configurations} \label{app:ladgetlist}
Below is the list of minimal 3-coloring ladgets identified through our exhaustive search.

\begin{table}[H]
	\centering
	\caption*{Minimal NAND implementations}
	\begin{tabular}{|l|c|c|c|c|}
		\hline
		Graph6 & $a_0$ & $\theta$ & $i_1$ & $i_2$ \\
		\hline
		\verb"FCZeO" & 3 & 4 & 2 & 6 \\
		\verb"FCZUO" & 2 & 5 & 1 & 3 \\
		\hline
	\end{tabular}
\end{table}

\begin{table}[H]
	\centering
	\caption*{Minimal OR implementations}
	\begin{tabular}{|l|c|c|c|c|}
		\hline
		Graph6 & $a_0$ & $\theta$ & $i_1$ & $i_2$ \\
		\hline
		\verb"GCQeMo" & 2 & 3 & 4 & 6 \\
		\verb"G?optW" & 1 & 2 & 0 & 3 \\
		\hline
	\end{tabular}
\end{table}

\begin{table}[H]
	\centering
	\caption*{Minimal AND implementations}
	\begin{tabular}{|l|c|c|c|c|}
		\hline
		Graph6 & $a_0$ & $\theta$ & $i_1$ & $i_2$ \\
		\hline
		\verb"GCQbeK" & 4 & 7 & 2 & 3 \\
		\verb"G?q`ug" & 2 & 0 & 1 & 3 \\
		\verb"GCR`qk" & 0 & 7 & 2 & 4 \\
		\hline
	\end{tabular}
\end{table}

\begin{table}[H]
	\centering
	\caption*{Minimal NOR implementations}
	\begin{tabular}{|l|c|c|c|c|}
		\hline
		Graph6 & $a_0$ & $\theta$ & $i_1$ & $i_2$ \\
		\hline
		\verb"I?BD?psjO" & 3 & 4 & 1 & 6 \\
		\verb"I?BD?p{lO" & 4 & 3 & 1 & 6 \\
		\verb"I?BDAo{lO" & 4 & 3 & 1 & 6 \\
		\verb"I?BDAqtN_" & 4 & 3 & 1 & 6 \\
		\verb"I?B@`YYZ_" & 0 & 1 & 3 & 7 \\
		\verb"I?B@`ZYj_" & 1 & 0 & 3 & 7 \\
		\verb"I?`DAbkdo" & 1 & 4 & 2 & 3 \\
		\verb"I?`FAqkF_" & 2 & 0 & 3 & 5 \\
		\verb"I?`D`pesO" & 2 & 3 & 4 & 5 \\
		\verb"I?bBD`[s_" & 4 & 5 & 2 & 3 \\
		\verb"I?bB@qqQo" & 3 & 2 & 4 & 5 \\
		\verb"I?b@baidO" & 2 & 3 & 1 & 4 \\
		\verb"I?b@aTwy_" & 3 & 2 & 5 & 7 \\
		\verb"I?b@dpMh_" & 3 & 2 & 1 & 4 \\
		\verb"I?bERGwdG" & 2 & 3 & 1 & 4 \\
		\verb"I?`aeIqq_" & 2 & 3 & 4 & 7 \\
		\verb"I?`adIWoo" & 3 & 4 & 1 & 2 \\
		\verb"I?`eKpocg" & 2 & 1 & 3 & 4 \\
		\verb"I?`eHrWgo" & 3 & 4 & 1 & 2 \\
		\verb"I?`cmPogg" & 1 & 2 & 3 & 4 \\
		\hline
	\end{tabular}
\end{table}

\begin{table}[H]
	\centering
	\caption*{Minimal XOR implementations}
	\begin{tabular}{|l|c|c|c|c|}
		\hline
		Graph6 & $a_0$ & $\theta$ & $i_1$ & $i_2$ \\
		\hline
		\verb"I?`DU_[X_" & 2 & 5 & 0 & 3 \\
		\verb"I?`DU`eFO" & 2 & 5 & 0 & 3 \\
		\verb"I?b@dHYy?" & 3 & 2 & 4 & 5 \\
		\verb"I?b@bFWd_" & 2 & 1 & 4 & 9 \\
		\hline
	\end{tabular}
\end{table}

\begin{table}[H]
	\centering
	\caption*{Minimal XNOR implementations}
	\begin{tabular}{|l|c|c|c|c|}
		\hline
		Graph6 & $a_0$ & $\theta$ & $i_1$ & $i_2$ \\
		\hline
		\verb"I?`DU_[X_" & 2 & 1 & 0 & 3 \\
		\verb"I?`DU`eFO" & 2 & 1 & 0 & 3 \\
		\hline
	\end{tabular}
\end{table}

\end{document}